\documentclass[10pt,conference]{IEEEtran}
\IEEEoverridecommandlockouts
\usepackage[dvipdf]{graphicx,color}
\usepackage{amssymb}
\usepackage{amsmath}
\usepackage{stfloats}
\usepackage{amsfonts}
\usepackage{balance}
\usepackage{color}
\usepackage{algorithm}
\usepackage{algpseudocode}
\usepackage{color}
\usepackage{varwidth}
\usepackage{multicol}
\usepackage{subfigure}
\usepackage{xspace}
\usepackage{enumerate}
\usepackage{xcolor,cite,etoolbox}
\usepackage{bm}
\usepackage{amsthm}

\newtheorem{theorem}{Theorem}

\newtheorem{corollary}{Corollary}
\newtheorem{remark}{Remark}
\usepackage[font=footnotesize,labelfont=footnotesize]{caption}
\usepackage{ragged2e}

\def\BibTeX{{\rm B\kern-.05em{\sc i\kern-.025em b}\kern-.08em
    T\kern-.1667em\lower.7ex\hbox{E}\kern-.125emX}}

\usepackage[font=footnotesize,skip=6.4pt]{caption}
\begin{document}

\title{RSMA-Aided Full-Duplex Networks Under Imperfect CSI and SIC: Performance Evaluation}
	\author{\IEEEauthorblockN{Farjam Karim\IEEEauthorrefmark{1},   
    Nurul Huda Mahmood\IEEEauthorrefmark{1}, Deepak Kumar\IEEEauthorrefmark{3}, Arthur Sousa de Sena\IEEEauthorrefmark{1}, and Matti Latva-aho\IEEEauthorrefmark{1}%
        \thanks{\hrule} 
	\thanks{This research was supported by the Research Council of Finland (former Academy of Finland) through the 6G Flagship  (Grant Number: 369116), Business Finland's 6GBridge-6CORE Project (Grant Number: 8410/31/2022), Tauno Tönningin säätiö grant and the Riitta Ja Jorma J. Takasen säätiö grant.}}\\

	\IEEEauthorblockA{\IEEEauthorrefmark{1}Centre for Wireless Communications, University of Oulu, Finland. \\ 
    \IEEEauthorblockA {\IEEEauthorrefmark{3} Computer and Information Engineering, Khalifa University, Abu Dhabi, United Arab Emirates.}  
		Email: \{farjam.karim,\;nurulhuda.mahmood,\;arthur.sena,\;matti.latva-aho\}@oulu.fi,\\
        deepak.kumar@ku.ac.ae
		}}

	\maketitle
	
\begin{abstract}
This work investigates a full-duplex (FD)-enhanced Rate-Splitting Multiple Access (RSMA) system under practical constraints, including imperfect channel state information (CSI) and successive interference cancellation (SIC). We derive closed-form expressions for key performance metrics, such as outage probability and throughput, for both uplink and downlink users. The analysis considers co-channel interference (CCI) from uplink to downlink users and models the self-interference (SI) channel as a random variable. Monte Carlo simulations validate the analytical results and highlight the impact of system imperfections on RSMA-FD performance. At low transmit power, imperfect CSI significantly affects the system, though this effect weakens as power increases. In contrast, imperfect SIC becomes more detrimental at high transmit power, causing severe degradation. Additionally, neglecting CCI and assuming perfect SI cancellation leads to substantial overestimation of performance. Lastly, we demonstrate that the SI cancellation factor must be carefully selected to suppress interference effectively. Otherwise, a poor choice limits the full potential of FD technology.
\end{abstract}
 
\begin{IEEEkeywords}
 Outage Probability, RSMA, throughput, FD, Imperfect CSI.
\end{IEEEkeywords}

\section{Introduction}
Sixth-generation (6G) wireless networks aim to support an unprecedented increase in connected devices while ensuring high data rates, low latency, and enhanced reliability~\cite{Karim_WCL_25}. With emerging applications such as smart infrastructure, real-time communications, and industrial automation, efficient spectrum utilization and interference management have become more critical than ever. Traditional multiple access techniques, including time-division multiple access, frequency-division multiple access, and other orthogonal multiple access (OMA) schemes, face significant challenges in meeting future connectivity demands~\cite{Ghosh_access_22}.

To overcome the limitations of OMA, non-orthogonal multiple access (NOMA) and rate-splitting multiple access (RSMA) have been proposed for next-generation wireless networks~\cite{Bruno_proc_24}. Both schemes enhance spectral efficiency by allowing multiple users to share the same time and frequency resources. In downlink NOMA, power-domain or code-domain multiplexing is employed, where users are assigned different power levels based on channel conditions, and successive interference cancellation (SIC) is used for signal separation~\cite{Karim_NOMA_WCL}. In uplink NOMA, multiple users transmit simultaneously over the same resources and SIC is employed at the receiver based on the signal received from the uplink users. However, NOMA faces challenges such as inter-user interference, fairness issues, and performance degradation due to imperfect channel state information (CSI) and multiple SIC.

In contrast, RSMA offers a more flexible and robust solution. In downlink transmission, the user's message is divided into a common message and a private message. The common message is then encoded into a common stream shared by all users, while the private messages are encoded into individual private streams unique to each user. The common stream is decoded by all users, while the private streams are decoded individually using SIC, treating other users' private streams as noise~\cite{Bruno_proc_24}. In the uplink, users, except for the one with the worst channel conditions, split their messages into two or more parts, with the number of parts determined by the receiver based on the channel conditions and the number of active users~\cite{yliu_twc_24_june}. By adjusting message splitting and power allocation, RSMA bridges the gap between NOMA and spatial division multiple access~\cite{Bruno_proc_24}, improving spectral efficiency, fairness, and robustness against imperfect CSI compared to traditional multiple access schemes.
For instance, in~\cite{Karim_WCL_25}, the authors derived analytical expressions for outage probability (OP), throughput, and asymptotic OP for an uplink RSMA-aided system considering imperfect CSI and SIC. In~\cite{yliu_twc_24_june}, the authors investigated the system performance of a hybrid automatic repeat request-enhanced uplink RSMA system.


Integrating full-duplex (FD) technology at the base station (BS) can significantly enhance system performance by enabling simultaneous transmission and reception~\cite{Karim_ICASSP_22, allu_tgcn_sept_24, xingwang}. This doubles spectral efficiency in theory, positioning FD as a key enabler for next-generation wireless networks. However, FD operation introduces self-interference (SI) due to signal leakage between the transmit and receive antennas. While SI can degrade performance in interference-limited environments, its impact can be significantly mitigated through techniques such as analog and digital SI cancellation~\cite{Sabharwal_jsac_jun_2014}.


Although a few works have mentioned FD in the context of RSMA~\cite{allu_tgcn_sept_24, xingwang}, existing studies have mainly focused on uplink and downlink RSMA with perfect CSI and SIC. To the best of our knowledge, no prior work has investigated FD-enhanced RSMA networks in depth or analyzed their theoretical performance, which constitutes the main motivation of this paper. Furthermore, previous studies have either  neglected co-channel interference (CCI) between uplink and downlink users or treated the SI channel at the BS as a constant residual power, assuming perfect CSI for both CCI and SI~\cite{Karim_ICASSP_22, Deepak_TVT_23, Deepak_Sensor_23}. This oversimplification reduces the problem's complexity but fails to account for the impact of imperfect CSI on both the CCI and SI channels in practical scenarios, forming the secondary motivation for our research. To address these gaps, we derive novel closed-form expressions for OP and throughput in FD-enhanced RSMA networks, encompassing both uplink and downlink scenarios, while incorporating the effects of imperfect CSI, SIC, CCI, and SI. Moreover, we model the SI channel as a random variable. Our analysis provides valuable insights into the  performance of FD-enabled RSMA networks. We highlight the significant impact of imperfect CSI, especially at low transmit power, and demonstrate how its effect diminishes at higher transmit power. Additionally, we reveal the critical role of SIC in performance degradation at high transmit power levels and emphasize the importance of optimizing the rate allocation factor in uplink RSMA systems. Furthermore, we show that RSMA is more robust than NOMA in handling imperfect CSI and that FD systems, when coupled with RSMA, substantially outperform half-duplex (HD) systems, particularly in resource-constrained scenarios, provided the SI cancellation is sufficiently efficient.

\vspace{2mm}
\noindent \textbf{Notations:} The functions $\gamma(\cdot, \cdot)$ and $\Gamma(\cdot)$ refer to the lower incomplete gamma function and the Gamma function, respectively. Additionally, $\mathcal{CN}(0, \sigma^2_{(\cdot)})$ indicates a complex Gaussian distribution with zero mean and variance $\sigma^2_{(\cdot)}$. The parameters $m$ and $\hat{\Omega}$ denote the shape and severity parameters of Nakagami-$m$ fading. The binomial coefficient is given by $\binom{\cdot}{\cdot}$, while $\exp(\cdot)$ denotes the exponential function, and the factorial of an integer $k$ is expressed as $k!$.
\section{System Model}
\label{Sys_main}
We investigate an FD-enhanced RSMA system, as depicted in Fig.~\ref{fig:system_diag}. In this configuration, the BS operates in FD mode, utilizing a single transmit antenna to simultaneously serve $N$  single antenna downlink users. Meanwhile, the BS’s receive antenna enables communication with two uplink users{\footnote{While this approach can be extended to a general multi-user uplink scenario, tractable analysis requires users to be grouped in pairs, as the optimal decoding order for an arbitrary number of uplink users is still an open problem.}}. RSMA is employed for both uplink and downlink transmissions, facilitating efficient resource allocation.
For the downlink, the channel gain between the BS and the $n_\text{th}$ user is expressed as
$h_n = \hat{h}_n + h_{ne}$, where $\hat{h}_n$ represents the estimated channel gain, while $h_{ne}$ accounts for channel estimation errors (CEE)~\cite{Karim_WCL_25}. Similarly, the SI channel gain between the BS's transmit and receive antennas is expressed as  ${h}_{SI} = \hat{h}_{SI} + h_{SIe}$, where $\hat{h}_{SI}$ denotes the estimated channel gain and CEE, respectively. 
\begin{figure}[t!]
    \centering
    \includegraphics[scale=0.23]{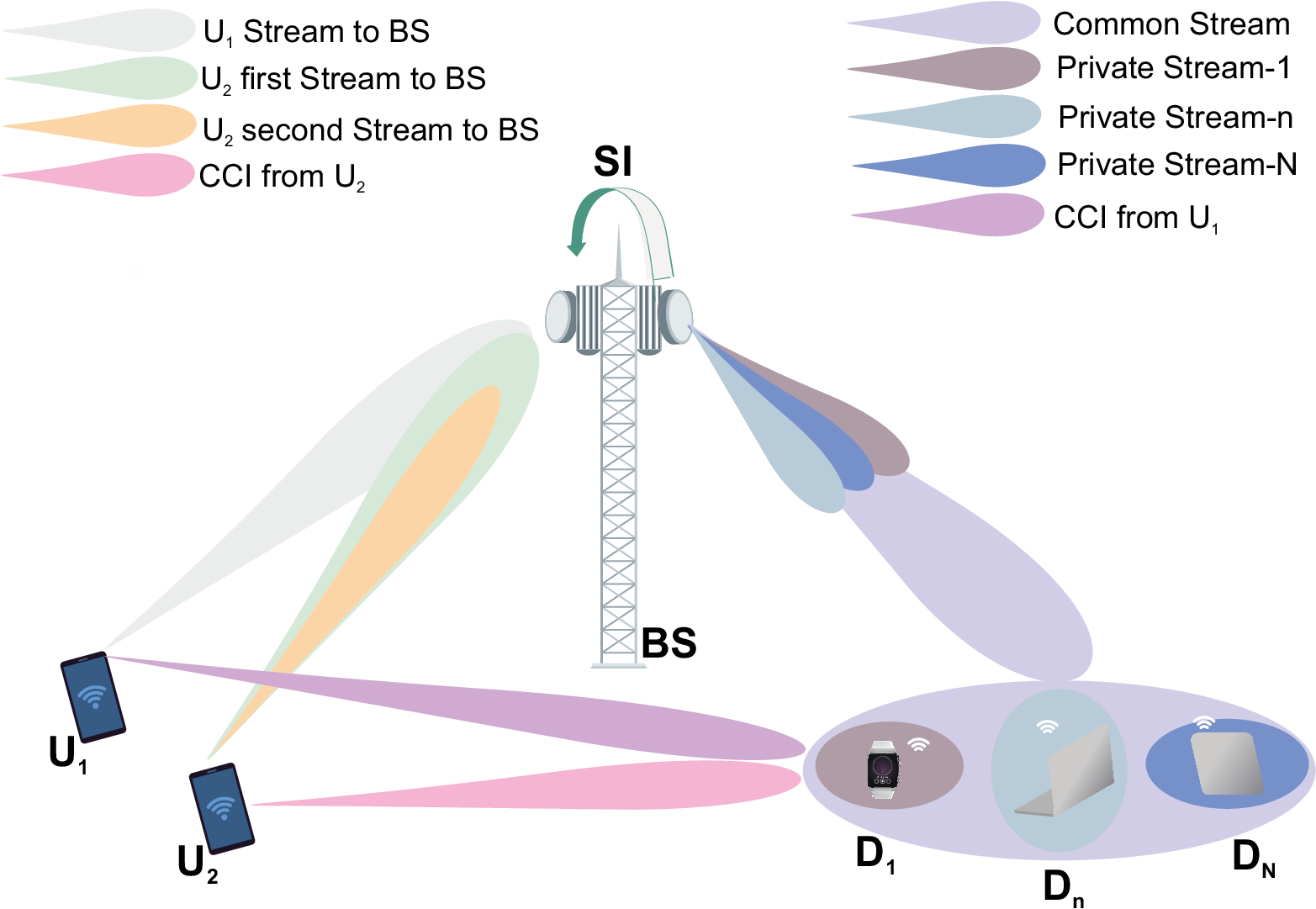}
    \caption{FD-Enhanced RSMA-Aided Network.}
    \label{fig:system_diag}
 \end{figure}
We now turn our attention to the uplink part. The channel gain between user $U_i$ and the BS is given by $g_i = \hat{g}_i + g_{ie}$, where $i\in\{1,2\}$ where $\hat{g}_i$ and $g_{ie}$ denote the estimated channel gain and CEE, respectively. Moreover, the channel gain between $U_i$ to  $n_\text{th}$ user $g_{\mathfrak{C}i} = \hat{g}_{\mathfrak{C}i} + g_{\mathfrak{C}ie}$. The estimation errors are assumed to follow a similar distribution as in the downlink case: $h_{ne} \sim \mathcal{CN}\left(0, \Omega_{hne}\right)$, $g_{ie} \sim \mathcal{CN}\left(0, \Omega_{gie}\right)$ for the uplink case and $g_{\mathfrak{C}ie} \sim \mathcal{CN}\left(0, \Omega_{g\mathfrak{C}ie}\right)$ for the CCI case. According to~\cite{Karim_WCL_25}, the variance of the downlink CEE is given by $\Omega_{hne} = \Omega_n / (1 + \rho_b \beta_{d} \Omega_n)$, where $\Omega_n$ denotes the variance of $h_n$, $\rho_b$ represents the signal-to-noise ratio (SNR) of the downlink transmission, and $\beta_{d}$ characterizes the quality of channel estimation. Consequently, the variance of $|\hat{h}_n|$ is given by $\hat{\Omega}_n = \Omega_n - \Omega_{hne}$. Note that, the variance of the CCI CEE, SI CEE and uplink CEE follows the same formulation, with $\beta_u$ denoting the channel estimation quality parameter for $U_i$. Moreover, all estimated channel gains follow Nakagami-$m$ fading. Next, we describe the superimposed symbol transmitted by the BS to the $N$ downlink users. The signal is given by $x=\left(\sqrt{P_b\alpha_c}x_c+\sum\limits^{N}_{n=1}\sqrt{P_b\alpha_{n_d}}x_n\right)$, where $x_c$ and $x_n$ denote the common data stream and the private data stream intended for the $n_\text{th}$ user, respectively. The power allocation coefficients, $\alpha_c$ and $\alpha_{n_d}$, satisfy the condition $\alpha_c + \sum_{n=1}^{N} \alpha_{n_d} = 1$, while $P_b$ represents the transmission power of the BS.
Thus, the signal received at the $n_{\text{th}}$ downlink user can be given as 
\begin{align}\label{downlink_signal}
    &y^d_{n} = \left(\hat{h}_n+h_{ne}\right)x+\left(\hat{g}_{\mathfrak{C}1}+g_{\mathfrak{C}1e}\right)\sqrt{P_1}x_1\nonumber\\
    &+\left(\hat{g}_{\mathfrak{C}2}+g_{\mathfrak{C}2e}\right)\left(\sqrt{P_2 \alpha_{12}}x_{12}+\sqrt{P_2 \alpha_{22}}x_{22}\right)+\omega_n,
\end{align}
where $\omega_n\sim\mathcal{CN}(0,\sigma^2_n)$ is the additive white Gaussian noise (AWGN). To successfully retrieve its intended message, the $n_{\text{th}}$ device first decodes the common stream from the received signal. The signal-to-interference-plus-noise ratio (SINR) for decoding the common stream is given by
\begin{align} 
    \gamma^{d}_{c,n} &= \frac{|\hat{h}_n|^2 \rho_b\alpha_c}{\begin{aligned} 
    \rho_b& (1-\alpha_c)|\hat{h}_n|^2 + |\hat{g}_{\mathfrak{C}1}|^2 \rho_1 +  |\hat{g}_{\mathfrak{C}2}|^2\rho_2+S_1
    \end{aligned}}
    \label{SINR_comm_down}
\end{align}
where $S_1 = \rho_b\Omega_{hne}+\rho_1\Omega_{g\mathfrak{C}1e}+\rho_2\Omega_{g\mathfrak{C}2e}+1$,  $\rho_{b}=P_b/\sigma^2_{n}$ and $\rho_i=P_i/\sigma^2_{b} \;\forall i\in \{1,2\}$. After successfully decoding the common stream, the $n_{\text{th}}$ user  attempts to decode its own private stream with a SINR of  
\begin{align} 
    \gamma^{d}_{p,n} &= \frac{|\hat{h}_n|^2 \rho_b\alpha_{n_d}}{\begin{aligned} 
    &1\!+\!\rho_b N_1 |\hat{h}_n|^2 \!+\! |\hat{g}_{\mathfrak{C}1}|^2 \rho_1 \!\!+\! |\hat{g}_{\mathfrak{C}2}|^2\rho_2 +\! S_2,
    \end{aligned}}
    \label{SINR_pvt}
\end{align}
where $N_1 =  \left(\!\sum\limits_{\substack{i=1\\ i\neq{n}}}^{N}\!\alpha_{i_d}\!+\!\vartheta\alpha_c\!\right)$, $S_2 =  1+\rho_1\Omega_{g\mathfrak{C}1e}+\rho_2\Omega_{g\mathfrak{C}2e}+\rho_b\Omega_{hne}\left(\sum\limits^{N}_{i=1}\alpha_{n_d}+\vartheta\alpha_c\right)$, and $\vartheta\in(0,1)$ denotes imperfect SIC factor with $\vartheta=0$ being perfect SIC.

While the downlink users decode their respective signals, the BS simultaneously receives signals from the uplink users. However, in the uplink RSMA scenario, the transmission process differs from downlink RSMA. In particular, the capacity region's boundary points are attained when one of the users employs message splitting~\cite{Karim_WCL_25}. The BS  is responsible for determining which user performs message splitting and allocating the corresponding transmission power accordingly~\cite{yliu_twc_24_june}. We assume that $U_2$ experiences a more favorable channel condition and, therefore, divides its message $x_2$ into two parts: $\varkappa_{12}$ and $\varkappa_{22}$. A fraction $\alpha_{12}$ of the $U_2$'s power $P_2$ is allocated to $\varkappa_{12}$, and the remaining fraction $\alpha_{22}$ is assigned to $\varkappa_{22}$, ensuring that $\alpha_{12} + \alpha_{22} = 1$. Meanwhile, $U_1$ transmits its message $\varkappa_1$ using its allocated power $P_1$.  These three messages are then encoded into separate streams, $x_{12}$, $x_{1}$, and $x_{22}$ which are transmitted simultaneously to the BS. The received signal at the BS from both the uplink users is given by
\begin{align}\label{uplink_signal}
    y_u = &\left(\hat{g}_1+g_{1e}\right)\sqrt{P_1}+\left(\hat{h}_{SI}+ h_{SIe}\right)\sqrt{\delta} x\nonumber\\
    &+\left(\hat{g}_2+g_{2e}\right)\left(\sqrt{P_2 \alpha_{12}}x_{12}+\sqrt{P_2 \alpha_{22}}x_{22}\right)+\omega_u,
\end{align}
where $\omega_u\sim\mathcal{CN}(0,\sigma^2_u)$ and $\delta\in (0,1)$ denotes the AWGN and SI cancellation factor with $0$ representing the perfect SI cancellation, respectively. The BS decodes the received signals in the order of $x_{12} \to x_{1} \to x_{22}$, using SIC for decoding. To fully retrieve the message from $U_2$, the BS must decode all three streams. If a given stream fails to be decoded, the subsequent streams are unlikely to be decoded as well. For $U_1$, the BS needs to first decode the $x_{12}$ stream with a SINR of 
\begin{align}\label{U_2_first}
    \gamma^u_{2,1} = \frac{|\hat{g}_2|^2\rho_2 \alpha_{12}}{\begin{aligned} &|\hat{g}_2|^2\rho_2 \alpha_{22}+|\hat{g}_1|^2\rho_1+ |\hat{h}_{SI}|^2\delta\rho_b +\rho_b\delta\Omega_{{h}_{SIe}}\\
     &+\rho_1\Omega_{g1e}+\rho_2\Omega_{g2e}+1.
    \end{aligned}}
\end{align}
Finally, after decoding $x_{12}$, the BS station attempts to decode  $x_{1}$ with the SINR given by
\begin{align}\label{U_1}
    \gamma^u_{1} = \frac{|\hat{g}_1|^2\rho_1}{\begin{aligned} &|\hat{g}_2|^2\rho_2 \left(\alpha_{22}+\vartheta\alpha_{12}\right)+ |\hat{h}_{SI}|^2\delta\rho_b+\rho_b\delta\Omega_{{h}_{SIe}}\\
    &+\rho_1\Omega_{g1e} +\rho_2\Omega_{g2e}\left(\alpha_{22}+\vartheta\alpha_{12}\right) +1.
    \end{aligned}}
\end{align}
Once,  $x_{1}$ has been decoded successfully, the BS attempts to decode the second stream of $U_2$'s message with a SINR of 
\begin{align}\label{U_2_last}
    \gamma^u_{2,2} = \frac{|\hat{g}_2|^2\rho_2 \alpha_{22}}{\begin{aligned} &|\hat{g}_2|^2\rho_2 \vartheta\alpha_{12}+|\hat{g}_1|^2\rho_1\vartheta+ |\hat{h}_{SI}|^2\delta\rho_b+\rho_1\Omega_{g1e}\vartheta\\
    &+\rho_b\delta\Omega_{{h}_{SIe}} +\rho_2\Omega_{g2e}\left(\alpha_{22}+\vartheta\alpha_{12}\right) +1.
    \end{aligned}}
    \end{align}
    \section{Performance Analysis}\label{Sec_Analysis}
   In this section, we derive closed-form analytical expressions for key performance metrics, including OP and throughput, for both downlink and uplink transmissions. As stated earlier, all links experience Nakagami-$m$ fading, where the squared channel magnitude follows a Gamma distribution~\cite{Farjam_WCNC_24}. The analysis first focuses on downlink transmission, followed by uplink transmission, considering both perfect and imperfect CSI, as well as SIC.
   \subsection{Downlink Analysis}
   Let the target SINR thresholds for the common and the private messages of the $n_{\text{th}}$ user be defined as $\gamma^{th}_{c,n}=2^{R_{c,n}}-1$ and $\gamma^{th}_{p,n}=2^{R_{p,n}}-1$, respectively, where ${R_{c,n}}$ and ${R_{p,n}}$ denote their respective target rates. An outage occurs for the 
   $n_{\text{th}}$ user  if either \eqref{SINR_comm_down} or \eqref{SINR_pvt} falls below these thresholds. The outage probability for each case is analyzed in the following theorem.
\begin{theorem}
    The OP at the $n_{\text{th}}$ downlink user considering imperfect CSI and SIC can be expressed as 
    \begin{align}\label{out_n_user}
        {P}^{d}_n = \mathcal{P}_{c, n}+\mathcal{P}_{p, n} - \left(\mathcal{P}_{c, n}\times\mathcal{P}_{p, n}\right),
    \end{align}
    where $\mathcal{P}_{c, n}$ and $\mathcal{P}_{p, n}$ are given in \eqref{common_out_downlink} and \eqref{private_out_downlink}, respectively, $W_1 = \frac{m_n \gamma^{th}_{c,n}}{\hat{\Omega_{n}}\rho_b\left(\alpha_c- \gamma^{th}_{c,n}\left(1-\alpha_c\right)\right)}$, $B_1={W_1 S_1}$, $B_2={W_1 \rho_1}$, $B_3={W_1 \rho_2}$, $A_1=\frac{m_{u2}}{\hat{\Omega}_{u2}}$, $C_1=\frac{m_{u1}}{\hat{\Omega}_{u1}}$, $m_n, m_{u1}$\; and \; $m_{u2}$ are the shape parameters of $D_n, U_1$ and $U_2$, respectively, $D_1=W_2\rho_1$, $D_2=\rho_2 W_2$, $D_3=W_2 S_2$, $S_1 = \rho_b\Omega_{hne}+\rho_1\Omega_{g\mathfrak{C}1e}+\rho_2\Omega_{g\mathfrak{C}2e}+1$, $S_2 =  1+\rho_1\Omega_{g\mathfrak{C}1e}+\rho_2\Omega_{g\mathfrak{C}2e}+\rho_b\Omega_{hne}\left(\sum\limits^{N}_{i=1}\alpha_{n_d}+\vartheta\alpha_c\right)$ and $W_2 = \frac{m_n \gamma^{th}_{p,n}}{\hat{\Omega}_{n}\rho_b\left(\alpha_{n_d}- \gamma^{th}_{p,n}N_1\right)}$.
\end{theorem}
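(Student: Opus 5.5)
We start from the outage event at the $n_{\text{th}}$ downlink user, $\{\gamma^{d}_{c,n}<\gamma^{th}_{c,n}\}\cup\{\gamma^{d}_{p,n}<\gamma^{th}_{p,n}\}$. By inclusion--exclusion,
\[
{P}^{d}_n=\mathcal{P}_{c,n}+\mathcal{P}_{p,n}-\Pr\left(\text{both}\right).
\]
To reach the product form in \eqref{out_n_user}, we treat the common-stream and private-stream outage events as independent, which is the simplifying assumption implicit in the statement. Strictly, both events depend on the same $|\hat{h}_n|^2$ and CCI gains, so this step is an approximation. We would flag this, or alternatively note that the exact joint term reduces to the outage of the larger of the two thresholds. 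The bulk of the work is therefore computing $\mathcal{P}_{c,n}$ and $\mathcal{P}_{p,n}$ in closed form.

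For $\mathcal{P}_{c,n}$, we rearrange \eqref{SINR_comm_down}. Assuming $\alpha_c>\gamma^{th}_{c,n}(1-\alpha_c)$, and declaring outage with probability one otherwise, the event becomes
\[
\frac{m_n}{\hat{\Omega}_n}|\hat{h}_n|^2< W_1\left(\rho_1 X_1+\rho_2 X_2+S_1\right)=B_2X_1+B_3X_2+B_1 ,
\]
where $X_i=|\hat{g}_{\mathfrak{C}i}|^2$. Since $|\hat{h}_n|^2$ is Gamma distributed with integer shape $m_n$, its conditional CDF is
\[
1-e^{-x}\sum_{k=0}^{m_n-1}\frac{x^k}{k!}.
\]
We substitute $x=B_1+B_2X_1+B_3X_2$ and expand $x^k$ with the multinomial (nested binomial) theorem into terms $B_1^{k-j-l}B_2^{j}B_3^{l}X_1^{j}X_2^{l}$.

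We then average over the independent Gamma variables $X_1$ (rate $C_1$, shape $m_{u1}$) and $X_2$ (rate $A_1$, shape $m_{u2}$). Each term factorizes into
\[
\int_0^\infty x^{j}e^{-B_2x}f_{X_1}(x)\,dx=\frac{C_1^{m_{u1}}\Gamma(j+m_{u1})}{\Gamma(m_{u1})(B_2+C_1)^{j+m_{u1}}},
\]
together with the analogous factor for $X_2$ with $A_1$ and $B_3$. This yields $\mathcal{P}_{c,n}$ as a finite triple sum of binomial coefficients, factorials and Gamma functions multiplied by $e^{-B_1}$. The private stream is handled identically from \eqref{SINR_pvt}. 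Under the condition $\alpha_{n_d}>\gamma^{th}_{p,n}N_1$, the threshold becomes $D_3+D_1X_1+D_2X_2$, giving $\mathcal{P}_{p,n}$ with $(B_1,B_2,B_3)$ replaced by $(D_3,D_1,D_2)$.

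The main obstacle is bookkeeping rather than analysis. We need the multinomial expansion, the Gamma integrals, and the feasibility conditions on the power coefficients to be correct. Integer $m_n$ is needed for the finite-sum CDF; for non-integer $m_n$ one would instead use $\gamma(m_n,\cdot)/\Gamma(m_n)$ with a series expansion. The independence justification for the product term is the other step needing care.
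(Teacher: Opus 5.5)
Your proposal follows the paper's proof in Appendix A step for step: you rearrange each SINR into a threshold on $|\hat{h}_n|^2$ under the feasibility condition, use the finite-sum (Erlang) CDF for integer $m_n$, expand the $k$th power binomially, and evaluate the two Gamma-moment integrals. The independence step you flag is exactly what the paper does without comment when it combines \eqref{common_out_downlink} and \eqref{private_out_downlink} into \eqref{out_n_user}, so your caveat is justified. In fact, both outage events are increasing in $(-|\hat{h}_n|^2, X_1, X_2)$, so Harris' inequality gives $\Pr(\text{both})\geq\mathcal{P}_{c,n}\mathcal{P}_{p,n}$, which makes \eqref{out_n_user} an upper bound on the exact OP rather than an identity.

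One correction to your aside: $\Pr(\text{both})$ is the outage at the smaller of the two $|\hat{h}_n|^2$-thresholds, not the larger. It is the union, i.e.\ ${P}^{d}_n$ itself, that reduces to the larger threshold, giving $\max(\mathcal{P}_{c,n},\mathcal{P}_{p,n})$. Even that nesting requires $W_1(S_1+I)$ and $W_2(S_2+I)$, with $I=\rho_1X_1+\rho_2X_2$, to be ordered for all $I\geq 0$, which can fail under imperfect CSI because $S_1\neq S_2$.
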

\begin{proof}
    Refer to Appendix A.
\end{proof}
\begin{corollary}
     The OP at the $n_{\text{th}}$ downlink user for the perfect CSI and SIC can be obtained when $\beta_{u,d}\to \infty$ and $\vartheta=0$. 
\end{corollary}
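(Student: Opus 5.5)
The corollary follows from Theorem 1 by a limiting argument rather than a new derivation. I would take the closed forms for $\mathcal{P}_{c,n}$ and $\mathcal{P}_{p,n}$ in \eqref{common_out_downlink} and \eqref{private_out_downlink} as given. Then I would track how each constant $W_1,W_2,B_1,B_2,B_3,D_1,D_2,D_3,S_1,S_2,N_1$ and each estimated-channel variance behaves as $\beta_{u,d}\to\infty$ and $\vartheta=0$. The identity \eqref{out_n_user}, ${P}^{d}_n=\mathcal{P}_{c,n}+\mathcal{P}_{p,n}-\mathcal{P}_{c,n}\mathcal{P}_{p,n}$, is unaffected by these parameters, so only the two constituent probabilities need to be examined.

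\textbf{Step 1: the CSI limit.} Every estimation-error variance has the form $\Omega/(1+\rho\beta\Omega)$. For example, $\Omega_{hne}=\Omega_n/(1+\rho_b\beta_d\Omega_n)$, and $\Omega_{g\mathfrak{C}ie}$ has the same form with $\beta_u$. For fixed positive $\rho_b,\rho_1,\rho_2$ each of these tends to $0$ as $\beta_{u,d}\to\infty$. Hence $\hat{\Omega}_n=\Omega_n-\Omega_{hne}\to\Omega_n$, and likewise $\hat{\Omega}_{u1}\to\Omega_{u1}$ and $\hat{\Omega}_{u2}\to\Omega_{u2}$. Substituting into the definitions gives $S_1\to1$ and $S_2\to1$. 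It also gives $W_1\to m_n\gamma^{th}_{c,n}/\big(\Omega_n\rho_b(\alpha_c-\gamma^{th}_{c,n}(1-\alpha_c))\big)$ and $A_1\to m_{u2}/\Omega_{u2}$, $C_1\to m_{u1}/\Omega_{u1}$.

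\textbf{Step 2: the SIC limit.} Setting $\vartheta=0$ turns $N_1$ into $\sum_{i\neq n}\alpha_{i_d}$, so $W_2$, $D_1$, $D_2$ and $D_3=W_2S_2\to W_2$ take their perfect-SIC values. The same substitution can be checked directly in the SINRs \eqref{SINR_comm_down} and \eqref{SINR_pvt}. It yields the perfect-CSI/SIC SINRs, with the residual common-stream interference and the error-variance terms removed and the noise term normalised to $1$. This confirms that the limiting parameters describe the intended physical model.

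\textbf{Step 3: passing the limit through the closed form.} This is the only real obstacle. Each term in \eqref{common_out_downlink} and \eqref{private_out_downlink} is a finite combination of powers, exponentials, gamma functions and binomial coefficients, with finite sums over integer Nakagami indices. These are continuous in $(W_i,B_i,D_i,A_1,C_1)$ on the region where the denominators $\alpha_c-\gamma^{th}_{c,n}(1-\alpha_c)$ and $\alpha_{n_d}-\gamma^{th}_{p,n}N_1$ are positive. Under that same feasibility condition, the outage probability equals $1$ whenever the corresponding denominator is non-positive.

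Setting $\vartheta=0$ only decreases $N_1$, so feasibility is preserved. The limits in Step 1 are finite and nonzero, so by continuity the limit can be moved inside every term. The result is the perfect-CSI/SIC OP expressed with the limiting constants.

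As a cross-check, I would argue at the level of probabilities. The outage event at given $\beta_{u,d}$ converges to the perfect-CSI event up to the null set where the SINR exactly equals its threshold. Dominated convergence then gives the same conclusion independently of the closed form.
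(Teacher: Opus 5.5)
Your proposal is correct and follows the route the paper intends. The paper gives no separate proof: the corollary is just Theorem~1 evaluated with all estimation-error variances sent to zero (so $S_1,S_2\to1$ and $\hat{\Omega}\to\Omega$) and with $\vartheta=0$ in $N_1$. Your continuity argument for passing $\beta_{u,d}\to\infty$ through the finite sums, together with the dominated-convergence cross-check, simply makes rigorous the substitution the paper takes for granted.
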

\begin{remark}
    The throughput at the $n_{\text{th}}$ downlink user considering imperfect CSI and SIC can be evaluated as 
    \begin{align}\label{th_down}
        \mathcal{T}^d_{n}= \left(1- {P}^{d}_n\right)\mathfrak{R},
    \end{align}
    where $\mathfrak{R} = R_{c, n}+R_{p, n}$. Moreover, by using the OP value obtained from Corollary 2 in \eqref{out_n_user}, the throughput for the perfect scenario can be obtained.
\end{remark}
\begin{figure*}
  \begin{align}\label{common_out_downlink}
      \mathcal{P}_{c, n} = 1-\sum\limits^{m_n-1}_{k=0}\sum\limits^{k}_{l=0}\sum\limits^{l}_{p=0}\binom{k}{l}\binom{l}{p}\frac{\exp(-B_1) B_1^{(l-p)}B_2^{(k-l)}B_3^p A_1^{m_{u2}} \Gamma(m_{u2}+p) C_1^{m_{u1}}\Gamma(m_{u1}+ k-l)}{(k!)\Gamma(m_{u2}) \left(A_1+B_3\right)^{m_{u2}+p} \Gamma(m_{u1}) \left(B_2+C_1\right)^{m_{u1}+k-l}}.
  \end{align}\hrulefill
   \begin{align}\label{private_out_downlink}
      \mathcal{P}_{p, n} = 1-\sum\limits^{m_n-1}_{k=0}\sum\limits^{k}_{l=0}\sum\limits^{l}_{p=0}\binom{k}{l}\binom{l}{p}\frac{\exp(-D_3) D_1^{(k-l)}D_3^{(l-p)}D_2^p A_1^{m_{u2}} \Gamma(m_{u2}+p) C_1^{m_{u1}}\Gamma(m_{u1}+ k-l)}{(k!)\Gamma(m_{u2}) \left(A_1+D_2\right)^{m_{u2}+p} \Gamma(m_{u1}) \left(D_2+C_1\right)^{m_{u1}+k-l}}.
  \end{align}\hrulefill
     \begin{align}\label{out_21}
      \mathcal{P}_{2, 1} = 1-\sum\limits^{m_{u2}-1}_{q=0}\sum\limits^{q}_{r=0}\sum\limits^{r}_{b=0}\binom{q}{r}\binom{r}{b}\frac{\exp(-E_6) E_4^{(q-r)}E_6^{(r-b)}E_5^b E_7^{m_{SI}} \Gamma(m_{SI}+b) C_1^{m_{u1}}\Gamma(m_{u1}+ q-r)}{(q!)\Gamma(m_{SI}) \left(E_7+E_5\right)^{m_{SI}+b} \Gamma(m_{u1}) \left(E_4+C_1\right)^{m_{u1}+q-r}}.
  \end{align}\hrulefill
    \begin{align}\label{out_11}
      \mathcal{P}_{1, 1} = 1-\sum\limits^{m_{u1}-1}_{d_1=0}\sum\limits^{d_1}_{r_1=0}\sum\limits^{r_1}_{r_2=0}\binom{d_1}{r_1}\binom{r_1}{r_2}\frac{\exp(-C_6) C_4^{(d_1-r_1)}C_6^{(r_1-r_2)}C_5^{r_2} E_7^{m_{SI}} \Gamma(m_{SI}+r_2) A_1^{m_{u2}}\Gamma(m_{u2}+ d_1-r_1)}{(d_1!)\Gamma(m_{SI}) \left(E_7+C_5\right)^{m_{SI}+r_2} \Gamma(m_{u2}) \left(C_4+A_1\right)^{m_{u2}+d_1-r_1}}.
  \end{align}\hrulefill
    \begin{align}\label{out_22}
      \mathcal{P}_{2, 2} = 1-\sum\limits^{m_{u2}-1}_{q=0}\sum\limits^{q}_{r=0}\sum\limits^{r}_{b=0}\binom{q}{r}\binom{r}{b}\frac{\exp(-L_7) L_5^{(q-r)}L_7^{(r-b)}L_6^b E_7^{m_{SI}} \Gamma(m_{SI}+b) C_1^{m_{u1}}\Gamma(m_{u1}+ q-r)}{(q!)\Gamma(m_{SI}) \left(E_7+L_6\right)^{m_{SI}+b} \Gamma(m_{u1}) \left(L_5+C_1\right)^{m_{u1}+q-r}}.
  \end{align}\hrulefill
\end{figure*}
\subsection{Uplink Analysis} Consider $\mathfrak{R}^u_1$ and $\mathfrak{R}^u_2$ as the target data rates for $U_1$ and $U_2$, respectively. As $U_2$'s message is split,  $\mathfrak{R}^u_2$ is also distributed between its two streams $x_{12}$ and $x_{22}$, with individual  target rates given by  $\mathfrak{R}^u_{12} = \zeta\mathfrak{R}^u_2$ and $\mathfrak{R}^u_{22} = (1-\zeta)\mathfrak{R}^u_2$, where $\zeta$ is the rate allocation factor, satisfying $\left(0\leq\zeta\leq1\right)$~\cite{yliu_twc_24_june}. The threshold for $U_1$'s OP is defined as $\gamma^{th}_{1,1}=2^{\mathfrak{R}^u_1}-1$. For $U_2$ the OP thresholds for its individual streams are given by  $\gamma^{th}_{2,1}=2^{\mathfrak{R}^u_{12}}-1$ and $\gamma^{th}_{2,2}=2^{\mathfrak{R}^u_{22}}-1$. The corresponding OP expressions are evaluated in the following theorems.
\begin{theorem}
    The OP for uplink user $U_1$ considering imperfect CSI and SIC can be evaluated as 
    \begin{align}\label{out_U_1}
        {P}^{u}_1 = \mathcal{P}_{2, 1}+\left(1-\mathcal{P}_{2, 1}\right)\mathcal{P}_{1, 1},
    \end{align}
    where $\mathcal{P}_{2, 1}$ and $\mathcal{P}_{1, 1}$ are given in \eqref{out_21} and \eqref{out_11}, respectively, $E_3 = \frac{m_{u2} \gamma^{th}_{2,1}}{\hat{\Omega}_{u2}\rho_2\left(\alpha_{12}- \gamma^{th}_{2,1}\alpha_{22}\right)}$, $E_2={\rho_b\delta}$, $E_4={E_3 \rho_1}$, $E_5={E_3 E_2}$, $A_1=\frac{m_{u2}}{\hat{\Omega}_{u2}}$, $C_1=\frac{m_{u1}}{\hat{\Omega}_{u1}}$,  $E_6=E_1 E_3$, $E_7=\frac{m_{SI}}{\hat{\Omega}_{SI}}$, $A_3=\left(\alpha_{22}+\vartheta\alpha_{12}\right)\rho_2$, $E_1 = \rho_b\delta\Omega_{{h}_{SIe}}+\rho_1\Omega_{g1e}+\rho_2\Omega_{g2e}+1$, $A_4 =  1+\rho_1\Omega_{g1e}+\rho_2\Omega_{g2e}\left(\alpha_{22}+\vartheta\alpha_{12}\right)+\rho_b\delta\Omega_{{h}_{SIe}}$, $C_4= C_3 A_3$, $C_5=C_3 E_2$, $C_6=C_3A_4$ and $C_3 = \frac{m_{u1} \gamma^{th}_{1,1}}{\hat{\Omega}_{u1}\rho_1}$.
\end{theorem}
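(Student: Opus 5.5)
The plan is to split the outage event of $U_1$ according to the decoding order $x_{12}\to x_1$. Outage occurs if $x_{12}$ fails, i.e. $\gamma^u_{2,1}<\gamma^{th}_{2,1}$. It also occurs if $x_{12}$ succeeds but $\gamma^u_1<\gamma^{th}_{1,1}$. This gives
\begin{align*}
P^u_1=\Pr\{\gamma^u_{2,1}<\gamma^{th}_{2,1}\}+\Pr\{\gamma^u_{2,1}\ge\gamma^{th}_{2,1}\}\Pr\{\gamma^u_1<\gamma^{th}_{1,1}\}.
\end{align*}
This treats the two events as independent, in the same spirit as the product form used in Theorem 1. We then identify $\mathcal{P}_{2,1}$ and $\mathcal{P}_{1,1}$ with the two marginal probabilities. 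Each of these is computed by the same three-variable Gamma conditioning technique as in Appendix A. The variables are the independent Gamma random variables $X_2=|\hat g_2|^2$, $X_1=|\hat g_1|^2$ and $Z=|\hat h_{SI}|^2$, with rate parameters $A_1$, $C_1$ and $E_7$ and integer shapes $m_{u2}$, $m_{u1}$ and $m_{SI}$.

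For $\mathcal{P}_{2,1}$, rearrange $\gamma^u_{2,1}<\gamma^{th}_{2,1}$ into a condition on $X_2$ alone. Assuming $\alpha_{12}>\gamma^{th}_{2,1}\alpha_{22}$ (otherwise the outage probability is $1$), this reads
\begin{align*}
X_2<\frac{\gamma^{th}_{2,1}(\rho_1X_1+E_2Z+E_1)}{\rho_2(\alpha_{12}-\gamma^{th}_{2,1}\alpha_{22})}.
\end{align*}
Multiplying through by $A_1$ turns the bound into $E_4X_1+E_5Z+E_6$, so the constants $E_3,E_4,E_5,E_6$ appear naturally. Conditioned on $X_1$ and $Z$, the Gamma CDF with integer shape gives
\begin{align*}
1-e^{-(E_4X_1+E_5Z+E_6)}\sum_{q=0}^{m_{u2}-1}\frac{(E_4X_1+E_5Z+E_6)^q}{q!}.
\end{align*}
Expand $(E_4X_1+E_5Z+E_6)^q$ with the binomial theorem twice, which produces the indices $r$ and $b$. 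The exponential then factorizes into separate factors in $X_1$ and $Z$. Averaging over $X_1$ and $Z$ uses the standard integral $\int_0^\infty x^{a-1}e^{-sx}\,dx=\Gamma(a)/s^a$ against the Gamma densities. This yields the factors $C_1^{m_{u1}}\Gamma(m_{u1}+q-r)/(\Gamma(m_{u1})(E_4+C_1)^{m_{u1}+q-r})$ and $E_7^{m_{SI}}\Gamma(m_{SI}+b)/(\Gamma(m_{SI})(E_7+E_5)^{m_{SI}+b})$, which is exactly \eqref{out_21}.

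For $\mathcal{P}_{1,1}$ the roles are swapped, and we condition on $X_1$. Rearranging $\gamma^u_1<\gamma^{th}_{1,1}$ gives
\begin{align*}
X_1<\frac{\gamma^{th}_{1,1}}{\rho_1}(A_3X_2+E_2Z+A_4).
\end{align*}
Multiplying by $C_1$ gives the bound $C_4X_2+C_5Z+C_6$. The same Gamma-CDF expansion, double binomial expansion and integration over $X_2$ (rate $A_1$) and $Z$ (rate $E_7$) reproduce \eqref{out_11}. Note that this inequality needs no positivity condition, because $\gamma^u_1$ has no self-interference term in $X_1$.

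The routine algebra is not the main obstacle; the independence step is. Both events depend on the same $X_1$, $X_2$ and $Z$, so the exact joint probability of the second term does not factorize. I would justify the product form the way the paper already does for the downlink in \eqref{out_n_user}: treat the successive decoding events as independent, and stress that this is an approximation, validated later by Monte Carlo. If an exact form were required, the intersection event would be a region that is the intersection of two half-spaces in $(X_1,X_2)$, and the result would no longer be closed form. So committing to the independence approximation is the intended choice.
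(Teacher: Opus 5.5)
Your proposal is correct and follows the paper's route: you split by decoding order into $\mathcal{P}_{2,1}+(1-\mathcal{P}_{2,1})\mathcal{P}_{1,1}$ under the same implicit independence assumption the paper uses in \eqref{out_n_user}, then apply the Appendix~A steps (bound on the desired gain, Erlang CDF, double binomial expansion, Gamma integrals), and your constants and exponents match \eqref{out_21} and \eqref{out_11} term by term. You are right, and more careful than the paper, to flag the product form as an approximation, since both events depend on $|\hat g_1|^2$, $|\hat g_2|^2$ and $|\hat h_{SI}|^2$. However, your aside that the exact joint probability would not be closed form is overstated: given $|\hat h_{SI}|^2$, the joint event is $|\hat g_2|^2$ exceeding the maximum of two linear functions of $|\hat g_1|^2$, and with integer shapes this still integrates to finite sums, only messier ones.
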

\begin{proof}
    Proof can be obtained by following a similar approach as outlined in Appendix A.
\end{proof}
\begin{corollary}
     The OP for uplink user $U_1$ considering perfect CSI and SIC can be obtained by setting  $\beta_{u,d}\to \infty$ and $\vartheta=0$. 
\end{corollary}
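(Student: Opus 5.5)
The statement is Corollary 2, so the plan is to specialize Theorem 2 and check that every quantity entering \eqref{out_U_1}, \eqref{out_21} and \eqref{out_11} has a well-defined limit. The only place where $\beta_{u,d}$ enters is through the CEE variances. Following \cite{Karim_WCL_25}, they have the form $\Omega_{g1e}=\Omega_{u1}/(1+\rho_1\beta_u\Omega_{u1})$, $\Omega_{g2e}=\Omega_{u2}/(1+\rho_2\beta_u\Omega_{u2})$, and $\Omega_{h_{SIe}}=\Omega_{SI}/(1+\rho_b\beta_d\Omega_{SI})$. Each of these tends to $0$ as $\beta_{u,d}\to\infty$, and consequently $\hat{\Omega}_{u1}\to\Omega_{u1}$, $\hat{\Omega}_{u2}\to\Omega_{u2}$ and $\hat{\Omega}_{SI}\to\Omega_{SI}$.

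Substituting these limits gives the following constants:
\begin{itemize}
\item $E_1\to 1$ and $A_4\to 1$;
\item $A_1\to m_{u2}/\Omega_{u2}$, $C_1\to m_{u1}/\Omega_{u1}$ and $E_7\to m_{SI}/\Omega_{SI}$;
\item $E_3$ and $C_3$ keep their form, with $\hat{\Omega}$ replaced by the true variances;
\item setting $\vartheta=0$ reduces $A_3$ to $\alpha_{22}\rho_2$, so $C_4=C_3\alpha_{22}\rho_2$ and $C_6=C_3$.
\end{itemize}

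Next I check that Theorem 2 may be evaluated at the limit, that is, that the OP expression is continuous in these parameters. Each of $\mathcal{P}_{2,1}$ and $\mathcal{P}_{1,1}$ is a finite triple sum. Every term is a product of exponentials, powers and Gamma ratios whose denominators $(E_7+E_5)$, $(E_4+C_1)$, $(E_7+C_5)$ and $(C_4+A_1)$ stay strictly positive as the parameters converge. Passing to the limit term by term is therefore legitimate, provided the existence condition $\alpha_{12}>\gamma^{th}_{2,1}\alpha_{22}$ behind $E_3$ holds. That condition does not depend on $\beta$ or $\vartheta$, so it is unaffected; if it fails, $\mathcal{P}_{2,1}=1$ in both the imperfect and perfect settings.

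For consistency, I would also argue directly from the SINRs \eqref{U_2_first} and \eqref{U_1} with $\Omega_{\cdot e}=0$ and $\vartheta=0$. Conditioning on $|\hat g_1|^2$ and $|\hat h_{SI}|^2$ and using the Gamma CDF $1-e^{-x}\sum_k x^k/k!$ reproduces the same sums via the binomial expansion, exactly as in Appendix A. This shows the limit coincides with the perfect-CSI/SIC outage and not merely with a formal substitution. Finally, the structure $P^u_1=\mathcal{P}_{2,1}+(1-\mathcal{P}_{2,1})\mathcal{P}_{1,1}$ carries over unchanged.

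The main obstacle I expect is the $\delta$-dependent SI term. Even with perfect CSI, the residual $|\hat h_{SI}|^2\delta\rho_b$ remains, so $E_5$ and $C_5$ do not vanish. I must confirm that the corollary's “perfect” case retains $\delta$, which I take to be the intended reading.
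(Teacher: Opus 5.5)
Your proposal is correct and follows the paper's route. The paper gives no separate proof: the corollary is Theorem 2 evaluated with the CEE variances driven to zero ($\hat{\Omega}\to\Omega$, $E_1, A_4\to 1$) and $\vartheta=0$ ($A_3=\alpha_{22}\rho_2$). Keeping $\delta$ is indeed the intended reading, since the $\delta=0$ case is treated separately in the numerical section.

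Your continuity check adds rigor the paper omits. Note, however, that your direct re-derivation confirms only the two marginal factors $\mathcal{P}_{2,1}$ and $\mathcal{P}_{1,1}$. The way they are combined in \eqref{out_U_1} is inherited from Theorem 2, not verified. That combination treats the two events as independent, although both depend on $|\hat{g}_1|^2$, $|\hat{g}_2|^2$ and $|\hat{h}_{SI}|^2$. So the direct check does not show the limit is the exact perfect-CSI/SIC outage, only that it matches Theorem 2's formula.
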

\begin{remark}
    The throughput of uplink user $U_1$ under   imperfect CSI and SIC can be evaluated as 
    \begin{align}\label{uplink_theou}
        \mathcal{T}^u_{1}= \left(1- {P}^{u}_1\right)\mathfrak{R}^u_1.
    \end{align}
   Moreover, by using the OP value obtained from Corollary 2 in \eqref{out_U_1}, the throughput for the perfect scenario can be obtained.
\end{remark}
\begin{theorem}
    The OP for uplink user $U_2$ considering imperfect CSI and SIC can be expressed as 
    \begin{align}\label{out_U_2}
        {P}^{u}_2 = &\mathcal{P}_{2, 1}+\left(1-\mathcal{P}_{2, 1}\right)\mathcal{P}_{1, 1}\nonumber\\
        &+\left(1-\mathcal{P}_{2, 1}\right)\left(1-\mathcal{P}_{1, 1}\right)\mathcal{P}_{2, 2},
    \end{align}
    where $\mathcal{P}_{2, 1}$, $\mathcal{P}_{1, 1}$ and $ \mathcal{P}_{2, 2}$ are given in \eqref{out_21}, \eqref{out_11}, and \eqref{out_22} respectively, $L_4 = \frac{m_{u2} \gamma^{th}_{2,2}}{\hat{\Omega}_{u2}\rho_2\left(\alpha_{22}- \gamma^{th}_{2,2}\vartheta\alpha_{12}\right)}$, $L_2 = \rho_b\delta\Omega_{{h}_{SIe}}+\rho_1\vartheta\Omega_{g1e}+\rho_2\Omega_{g2e}\left(\alpha_{22}+\vartheta\alpha_{12}\right)+1$, $L_3=\rho_1\vartheta$, $L_5= L_4 L_3$, $L_6=L_4 E_2$ and $L_7=L_4L_2$.
\end{theorem}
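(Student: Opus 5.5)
The plan has two parts: first the event decomposition that gives \eqref{out_U_2}, then the three conditional probabilities, each handled by the Gamma-CDF technique of Appendix~A.

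\textbf{Event decomposition.} Because decoding follows the order $x_{12}\to x_1\to x_{22}$, $U_2$ is in outage exactly when one of three disjoint events occurs:
\begin{itemize}
\item $\mathcal{E}_1=\{\gamma^u_{2,1}<\gamma^{th}_{2,1}\}$;
\item $\mathcal{E}_2=\{\gamma^u_{2,1}\ge\gamma^{th}_{2,1},\ \gamma^u_{1}<\gamma^{th}_{1,1}\}$;
\item $\mathcal{E}_3=\{\gamma^u_{2,1}\ge\gamma^{th}_{2,1},\ \gamma^u_{1}\ge\gamma^{th}_{1,1},\ \gamma^u_{2,2}<\gamma^{th}_{2,2}\}$.
\end{itemize}
Hence $P^u_2=\Pr(\mathcal{E}_1)+\Pr(\mathcal{E}_2)+\Pr(\mathcal{E}_3)$.

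As in Theorem~2, I would treat the per-stage failure events as independent. This is the same simplification that produces the product form in \eqref{out_U_1} and in Theorem~1. It turns $\Pr(\mathcal{E}_2)$ and $\Pr(\mathcal{E}_3)$ into the products in \eqref{out_U_2}. Since $\mathcal{P}_{2,1}$ and $\mathcal{P}_{1,1}$ are already available from Theorem~2, the only new ingredient is $\mathcal{P}_{2,2}=\Pr(\gamma^u_{2,2}<\gamma^{th}_{2,2})$.

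\textbf{Computing $\mathcal{P}_{2,2}$.} I would rewrite the event using \eqref{U_2_last}. Assuming $\alpha_{22}>\gamma^{th}_{2,2}\vartheta\alpha_{12}$ (otherwise the outage probability is $1$), the event is equivalent to
\[
X<\frac{\gamma^{th}_{2,2}}{\rho_2(\alpha_{22}-\gamma^{th}_{2,2}\vartheta\alpha_{12})}\left(L_3Y+E_2Z+L_2\right),
\]
where $X=|\hat g_2|^2$, $Y=|\hat g_1|^2$ and $Z=|\hat h_{SI}|^2$ are independent Gamma variables.

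Multiplying by $m_{u2}/\hat\Omega_{u2}$ converts the threshold into $L_5Y+L_6Z+L_7$. For integer $m_{u2}$, the Gamma CDF of $X$ gives the conditional probability
\[
1-e^{-(L_5Y+L_6Z+L_7)}\sum_{q=0}^{m_{u2}-1}\frac{(L_5Y+L_6Z+L_7)^q}{q!}.
\]

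\textbf{Averaging.} I would expand the power by the binomial theorem twice, splitting first into $L_5Y$ versus $(L_6Z+L_7)$ with index $r$, and then $(L_6Z+L_7)^r$ with index $b$. This yields monomials $L_5^{q-r}L_7^{r-b}L_6^bY^{q-r}Z^b$. Averaging over $Y\sim\Gamma(m_{u1},\hat\Omega_{u1}/m_{u1})$ and $Z\sim\Gamma(m_{SI},\hat\Omega_{SI}/m_{SI})$ uses $\int_0^\infty y^{a}e^{-sy}f_Y(y)\,dy=\frac{C_1^{m_{u1}}\Gamma(m_{u1}+a)}{\Gamma(m_{u1})(C_1+s)^{m_{u1}+a}}$ and its analogue with $E_7$. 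The result is \eqref{out_22}, and substituting it into the decomposition gives \eqref{out_U_2}.

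\textbf{Main obstacle.} The routine part is the moment-type Gamma integrals. The delicate step is justifying the factorized product form when $\gamma^u_{2,1}$, $\gamma^u_1$ and $\gamma^u_{2,2}$ all share $X$, $Y$ and $Z$. I would follow Theorem~2 and adopt per-stage independence as the working approximation, flagging it as the same one used there. The alternative would be the exact joint probability via a region integral, which is not closed form, so I would not pursue it.
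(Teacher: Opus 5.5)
Your proposal is correct and follows essentially the same route as the paper, whose proof defers to Appendix~A. There, as in your plan, the outage event $\gamma^u_{2,2}<\gamma^{th}_{2,2}$ is rewritten as a threshold on $|\hat g_2|^2$ (valid when $\alpha_{22}>\gamma^{th}_{2,2}\vartheta\alpha_{12}$), the Erlang form of the Gamma CDF is used, the power is expanded binomially twice, and the result is averaged over $|\hat g_1|^2$ and $|\hat h_{SI}|^2$ with Gamma moment integrals, which reproduces \eqref{out_22} term by term. The factorized combination in \eqref{out_U_2} rests on the same stage-wise independence the paper assumes implicitly, as in \eqref{out_n_user} and \eqref{out_U_1}, and you are right to flag it as an approximation, since all three SINRs share $|\hat g_2|^2$, $|\hat g_1|^2$ and $|\hat h_{SI}|^2$.
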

\begin{proof}
    Proof can be obtained by following a similar approach as outlined in Appendix A.
\end{proof}
\begin{corollary}
     The OP for uplink user $U_2$ considering perfect CSI and SIC can be obtained by setting  $\beta_{u,d}\to \infty$ and $\vartheta=0$. 
\end{corollary}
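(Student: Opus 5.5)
The statement to prove is the last corollary: the OP of $U_2$ under perfect CSI and perfect SIC is the limit of \eqref{out_U_2} as $\beta_{u,d}\to\infty$ and $\vartheta\to 0$. I would not re-derive anything. Instead I would show that the two limiting operations pass through the closed form of the preceding theorem, term by term.

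\textbf{Step 1: the channel-estimation-error variances vanish.} By the model of \cite{Karim_WCL_25}, every CEE variance has the form $\Omega_{x}/(1+\rho\beta\Omega_x)$. So $\Omega_{g1e},\Omega_{g2e},\Omega_{h_{SIe}}\to 0$ as $\beta_u\to\infty$ (resp.\ $\beta_d\to\infty$ for the SI link). At the same time $\hat{\Omega}_{u1}\to\Omega_{u1}$, $\hat{\Omega}_{u2}\to\Omega_{u2}$ and $\hat{\Omega}_{SI}\to\Omega_{SI}$. Hence $\hat{g}_i\to g_i$ and $\hat{h}_{SI}\to h_{SI}$ in distribution, and the SINRs \eqref{U_2_first}--\eqref{U_2_last} reduce to their perfect-CSI forms. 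In the constants this gives $E_1\to 1+0$, $A_4\to 1$ and $L_2\to 1$. The parameters $A_1, C_1, E_7$ tend to $m/\Omega$ with the true variances.

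\textbf{Step 2: set $\vartheta=0$.} This gives $A_3=\alpha_{22}\rho_2$, $L_3=0$ (so $L_5=0$), and $L_4=m_{u2}\gamma^{th}_{2,2}/(\Omega_{u2}\rho_2\alpha_{22})$. The SINR \eqref{U_2_last} then loses its residual $x_{12}$ and $x_1$ terms, as perfect SIC requires.

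\textbf{Step 3: pass the limits through the closed forms.} Each of \eqref{out_21}, \eqref{out_11}, \eqref{out_22} is a finite sum of products of exponentials, powers and Gamma ratios, all continuous in the constants. Continuity is preserved provided the denominators stay positive, i.e.\ $E_7+E_5>0$, $C_4+A_1>0$ and $L_5+C_1=C_1>0$. Since the outer formula \eqref{out_U_2} is a polynomial in the $\mathcal{P}$'s, the limit of $P^u_2$ equals $P^u_2$ evaluated at the limiting constants.

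\textbf{Step 4: confirm this is the perfect-case OP.} The perfect-case derivation would follow Appendix A verbatim with the limiting SINRs, so it yields exactly these evaluated expressions. Equivalently, the outage events themselves converge, and dominated convergence, with probabilities bounded by $1$, justifies exchanging limit and probability.

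\textbf{Main obstacle: degenerate terms.} Two points need care.

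\emph{The $0^0$ convention.} When $L_5=0$, the powers $L_5^{(q-r)}$ vanish except at $q=r$, so the terms with $q\neq r$ drop out. This requires reading $0^0=1$, which matches the integral it came from: the $U_1$ interference disappears, so its expectation factor is $1$.

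\emph{The positivity condition.} The threshold condition $\alpha_{22}-\gamma^{th}_{2,2}\vartheta\alpha_{12}>0$ must hold uniformly near $\vartheta=0$. It is automatic in the limit, and analogous conditions are needed for $E_3$ and $W$-type constants. The corollary holds under these standing positivity conditions, outside which the OP is $1$.
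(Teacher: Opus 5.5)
Your proposal is correct and follows the paper's approach: the paper gives no separate proof, and the corollary is just the substitution $\Omega_{g1e},\Omega_{g2e},\Omega_{h_{SIe}}\to 0$, $\hat{\Omega}\to\Omega$ and $\vartheta=0$ into the closed form of the preceding theorem. Your continuity, $0^0=1$ (for $L_5=0$) and positivity remarks make explicit what the paper leaves unstated, and they check out; for instance, with $L_5=0$ and $L_7=L_4$ only the $q=r$ terms of $\mathcal{P}_{2,2}$ survive, and these reproduce exactly the expression obtained by deriving the perfect-case outage directly.
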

\begin{remark}
    The throughput of uplink user $U_2$ under   imperfect CSI and SIC can be evaluated as 
    \begin{align}\label{uplink_2}
        \mathcal{T}^u_{2}= \left(1- {P}^{u}_2\right)\mathfrak{R}^u_2,
    \end{align}
   where $\mathfrak{R}^u_2 =\mathfrak{R}_{21}+\mathfrak{R}_{22}$. Moreover, by using the OP value obtained from Corollary 3 in \eqref{out_U_2}, the throughput for the perfect scenario can be obtained.
\end{remark}

\section{Numerical Results}\label{Sec:Discussion}
In this section, Monte Carlo simulations are conducted to validate the accuracy of the derived analytical expressions for OP and throughput. The simulations use a distance-dependent path-loss model, represented as $\xi / \mathfrak{D}_{j}^{\Xi}$. Here, $\xi = 1$ meter is the reference distance, and $\mathfrak{D}_j$ denotes the separation between the transmitter and receiver for both uplink and downlink communications. The index $j$ corresponds to the following communication links: BS $\to n_{\text{th}}$ user, $U_1 \to \text{BS}$, $U_1 \to n_{\text{th}}$ user, BS transmit antenna to receive antenna, and $U_2 \to \text{BS}$, $U_2 \to n_{\text{th}}$ user. The distance between the BS's transmit and receive antennas is set to 1.5 meters. The path-loss exponents are set to $\Xi = 3.3$ for uplink and downlink transmissions, $\Xi = 3.8$ for CCI links, and $\Xi = 2$ for the SI channel. The Nakagami-$m$ shape parameter for CCI links from $U_n$ to $D_n$ is set at $3$. Note that the downlink communication expressions are valid for $N$ devices. However, for clarity, results are plotted only for two downlink users, $D_1$ and $D_2$.
  \begin{table}[h!]	\renewcommand{\arraystretch}{1.0}
		\centering
		\caption{ Simulation Parameters.}
		\label{t3}
			\resizebox{\columnwidth}{!}{\begin{tabular}{|l|l|l|l|l|l|}
			\hline
			Parameter         & Value         & Parameter & Value  & Parameter & Value  \\ \hline
			$m_n=m_{SI}  $   &     $5 $   	&  $m_{u1}=m_{u2}$ (to BS)  &     $4$ & $\sigma^2_n = \sigma^2_b$   &     $-100 $ dB \\ \hline
			
			$\alpha_{12}$       & $0.71 $ &	   $\alpha_{22}$       & $0.29 $ &  $\alpha_c $  &     $0.5$  	 \\ \hline
			
		 $U_1$-BS       & $70 $~m 	&   $U_2$-BS      & $66 $~m 	&     BS-$D_1$     &   $85$~m   	 \\ \hline

   $\alpha_{1_d} $  &     $0.15$	&  $\alpha_{2_d} $  &     $0.35$  &     BS-$D_2$    &     $ 87$~m  	 \\ \hline

    $U_1-D_n $  &     $115$~m	&  $U_2-D_n $  &     $118$~m  &     $\zeta $    &     $0.17 $ 	 \\ \hline

 ${R}_{c,n} $  &     $0.45$	&  ${R}_{p,1} $  &     $0.25$  &     ${R}_{p,2} $    &     $0.25 $ 	 \\ \hline

  $\mathfrak{R}_{1}^u $  &     $0.65$	&  $\mathfrak{R}_{2}^u $  &     $0.80$	  &     $\delta  $    &     $ 10^{-7} $ 	 \\ \hline
            
		
		\end{tabular}}\vspace{-1em}
	\end{table}
     \begin{figure*}[!t]
  \begin{minipage}[b]{0.33\textwidth}
    \centering
    \includegraphics[width=\textwidth]{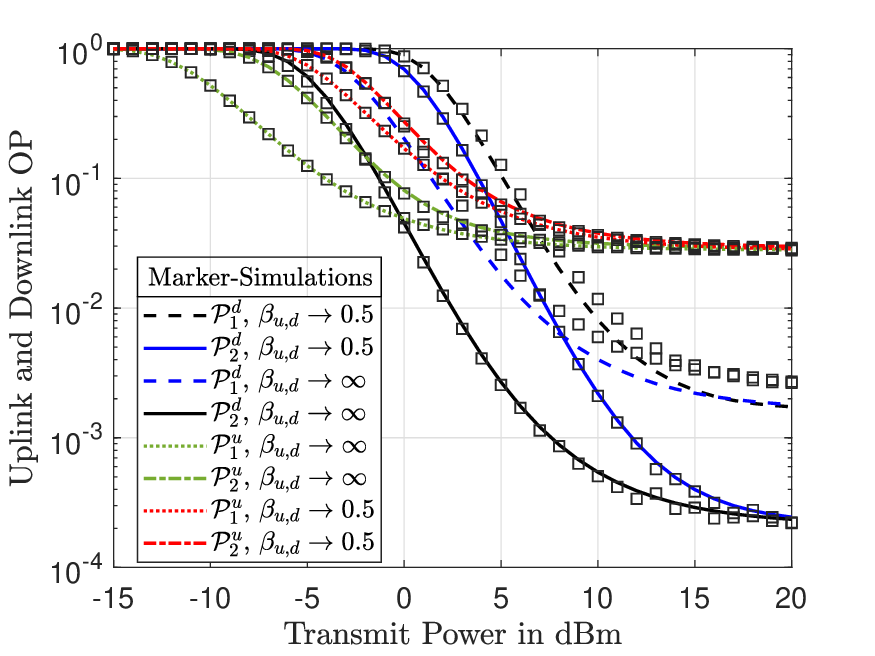}
    \caption{Effect of Imperfect CSI.}
    \label{Up_down_OP}
  \end{minipage}
  \begin{minipage}[b]{0.33\textwidth}
    \centering
    \includegraphics[width=\textwidth]{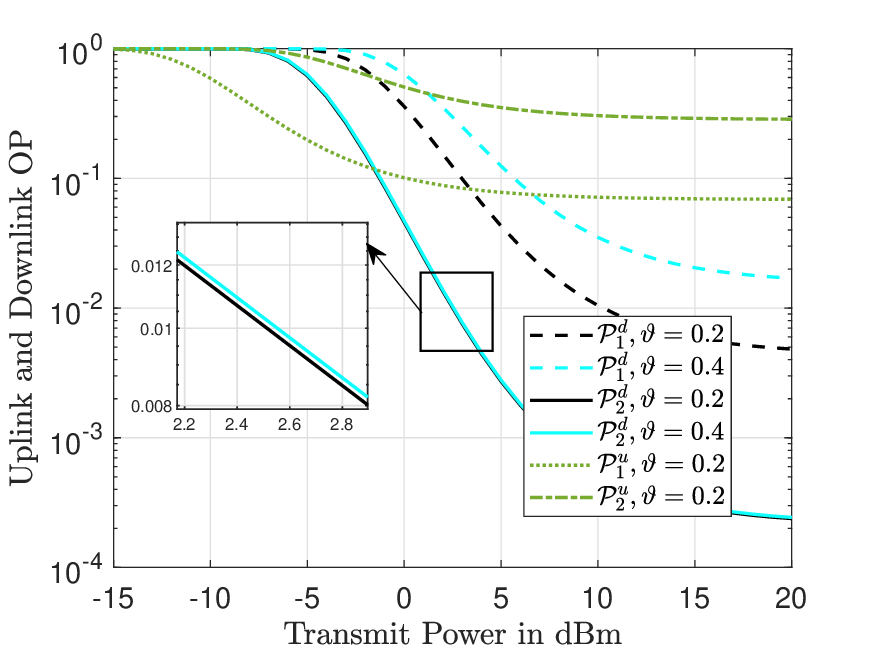}
    \caption{Effect of Imperfect SIC.}
    \label{SIC_out}
  \end{minipage}
 \begin{minipage}[b]{0.33\textwidth}
    \centering
    \includegraphics[width=\textwidth]{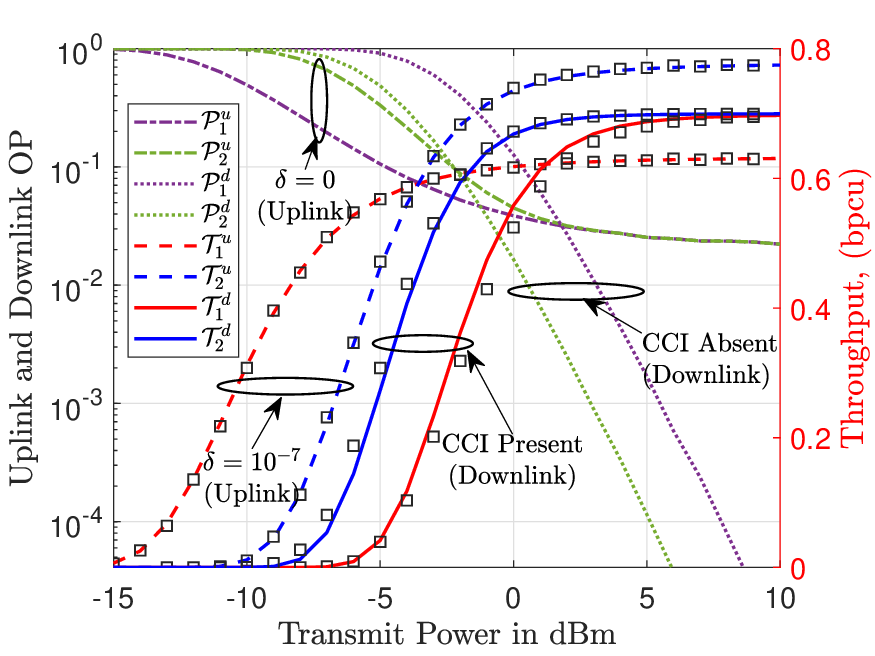}
    \caption{OP and Throughput versus Transmit power.}
    \label{detect_prob}
  \end{minipage}
 \vspace{-2em} 
\end{figure*}

  Fig.~\ref{Up_down_OP} and Fig.~\ref{SIC_out} illustrate the OP of the downlink users as well as uplink users as a function of the transmit power under both perfect and imperfect CSI and SIC conditions, respectively. Notably, in Fig.\ref{Up_down_OP}, perfect SIC is assumed at all receiving nodes, i.e., $\vartheta = 0$, whereas Fig.~\ref{SIC_out} presents the OP performance for different values of $\vartheta$ under the assumption of perfect CSI. From Fig.~\ref{Up_down_OP}, it is evident that the simulated results closely align with the analytical findings, thereby validating the accuracy of the derived OP expressions in \eqref{out_n_user}, \eqref{out_U_1}, and \eqref{out_U_2}, along with their corresponding corollaries for both perfect and imperfect CSI scenarios.
Several key insights can be drawn from these plots. First, imperfect CSI significantly impacts the OP of all users, particularly at low transmit power levels. Specifically, for downlink users, the impact is pronounced at transmit power values below $14$~dBm, whereas for uplink users, it is observed approximately below $12$~dBm. However, as the transmit power increases beyond these values, the impact of imperfect CSI diminishes, and the OP performance converges to that of the perfect CSI case. In contrast, the effect of imperfect SIC follows an opposite trend. Specifically, when $\vartheta \neq 0$, increasing transmit power leads to a severe degradation in OP performance. At lower transmit powers, the OP gap among $D_1$, $D_2$, $U_1$, and $U_2$ remains small. However, as the transmit transmit power increases, this gap widens, a trend that applies to both uplink and downlink users.
Additionally, imperfect SIC causes OP to saturate much earlier compared to the perfect SIC case. For instance, $\mathcal{P}^{u}_1$ and $\mathcal{P}^{u}_2$ saturate at approximately $3\times 10^{-2}$ around $15$~dBm under both perfect and imperfect CSI conditions. However, when $\vartheta = 0.2$, the OP values drop to $\mathcal{P}^{u}_1 \approx 6.9\times 10^{-2}$ and $\mathcal{P}^{u}_2 \approx 1.2\times 10^{-1}$. A similar trend is observed for the downlink users. However, in the downlink, the impact of SIC is more pronounced on $\mathcal{P}^{d}_1$, while $\mathcal{P}^{d}_2$ remains only slightly affected. This behavior is attributed to the specific values chosen for $\alpha_{1_d}$, $\alpha_{2_d}$, and the target rate $R_{p,2}$ in evaluating $\mathcal{P}^{d}_2$. To effectively assess the impact of $\vartheta > 0$ on all the downlink users, optimizing the private stream power allocation coefficients and their respective target rates is recommended.

Fig.~\ref{detect_prob} illustrates the simulated OP (left y-axis) under the assumption of perfect SI cancellation $(\delta=0)$ for $U_1$ and $U_2$, as well as the absence of CCI links for $D_1$ and $D_2$. Additionally, the figure presents the users' throughput (right y-axis) considering $\delta \neq 0$ and the presence of CCI links. The markers represent simulated results, while the solid and dashed lines correspond to the analytical throughput expressions derived in \eqref{th_down}, \eqref{uplink_theou}, and \eqref{uplink_2}.
Both metrics are plotted as a function of transmit power under perfect CSI and SIC.
A few notable insights emerge when comparing the OP in Fig.~\ref{detect_prob} with Fig.~\ref{Up_down_OP}. First, in Fig.~\ref{detect_prob}, $\mathcal{P}^{u}_1$ and $\mathcal{P}^{u}_2$ saturate at approximately $3$dBm, whereas in Fig.~\ref{Up_down_OP}, this saturation occurs around $15$~dBm. This 12 dBm difference highlights the significant impact of SI modeling on performance estimation. It underscores that assuming perfect SI cancellation can lead to overly optimistic and inaccurate performance predictions. Furthermore, treating SI as a constant residual power, rather than modeling it as a random variable, may also yield misleading results.
For $D_1$ and $D_2$, the OP continues to improve as the transmit power increases, unlike in Fig.~\ref{Up_down_OP}, where it eventually saturates. This lack of saturation is attributed to the absence of CCI links in this scenario. This observation emphasizes a crucial insight: in FD systems, CCI between uplink and downlink cannot be neglected, even when the separation distance is large, as it significantly influences system performance.
Next, considering the throughput results, throughput initially increases with transmit power for all users. However, it eventually saturates beyond certain transmit power values for both uplink and downlink users. This saturation occurs due to the fulfillment of the respective target rates. Adjusting the target rates for different users could enhance or degrade throughput, but it would also affect the OP of all users. Therefore, the target rates must be carefully selected to achieve the desired QoS.
 \begin{figure*}[!t]
  
   \begin{minipage}[b]{0.33\textwidth}
     \centering
     \includegraphics[width=\textwidth]{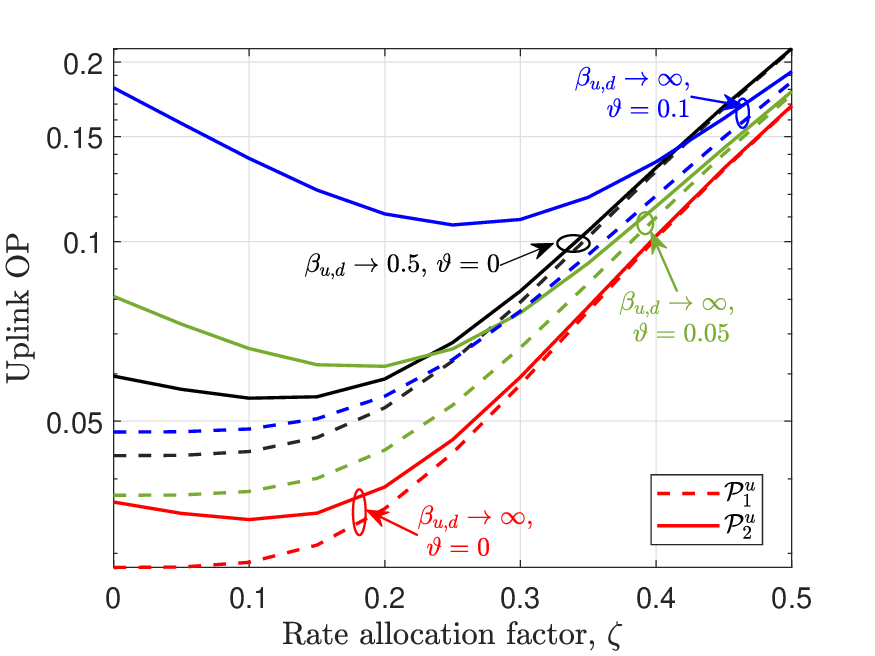}
     \caption{Impact of $\zeta$ on Uplink OP.}
        \label{zeta}
   \end{minipage}
   \begin{minipage}[b]{0.33\textwidth}
     \centering
    \includegraphics[width=\textwidth]{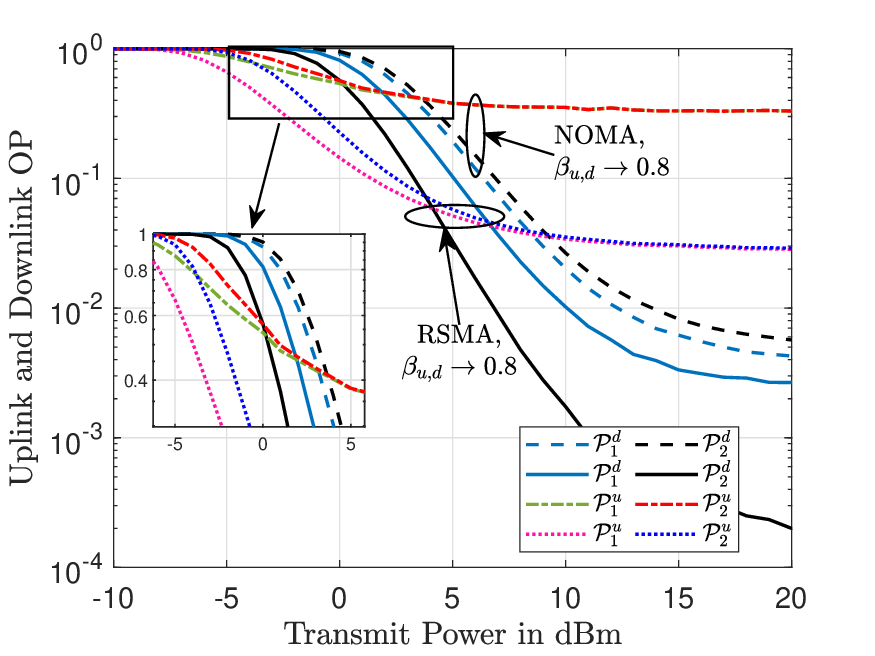}
     \caption{NOMA versus RSMA Comparison.}
     \label{RSMA_NOMA}
   \end{minipage}
   \begin{minipage}[b]{0.33\textwidth}
    \centering
    \includegraphics[width=\textwidth]{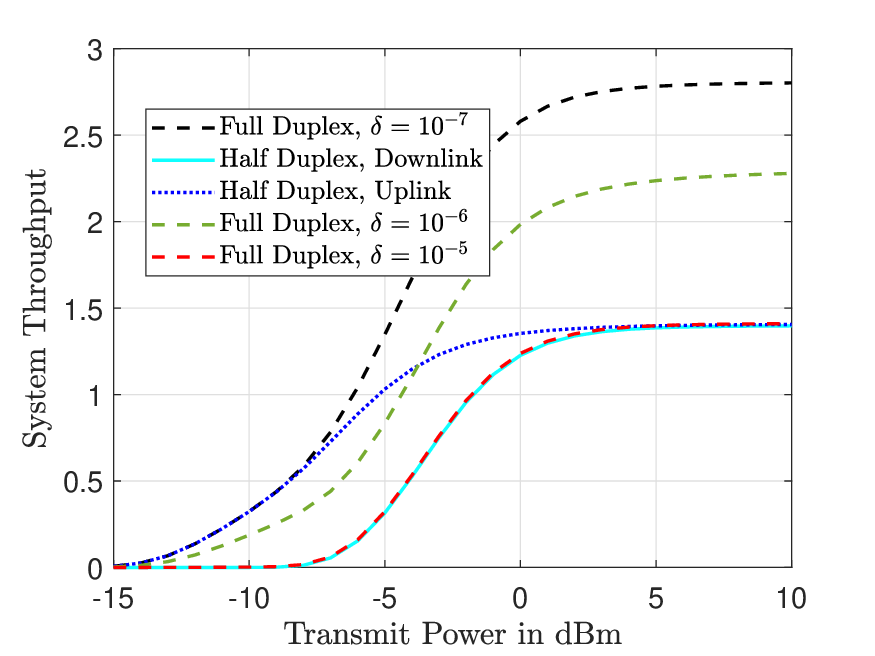}
    \caption{Full-duplex versus half-duplex.}
    \label{FD_HD_throughput}
  \end{minipage}
  \vspace{-2.5em}
\end{figure*}

 Fig.~\ref{zeta} illustrates the OP behavior of uplink users as a function of the rate allocation factor ($\zeta$). It can be observed that under both perfect and imperfect CSI scenarios considering perfect SIC; an optimal range of $\zeta$ lies between 0.15 and 0.2 to achieve a balanced OP between $U_1$ and $U_2$. However, when $\zeta > 0.2$, the OP of both users degrades rapidly and becomes highly interdependent. This behavior arises due to the nature of the uplink RSMA decoding process. Additionally, if $\zeta$ is set to 1, the threshold in \eqref{U_2_last} effectively reduces to zero, rendering $U_2$’s message splitting ineffective and forcing both users into complete outage. Moreover, when imperfect SIC is considered with perfect CSI conditions; finding an optimal $\zeta$ range becomes significantly more challenging. For instance, at $\vartheta = 0.05$, a balanced OP is achieved for $U_1$ and $U_2$ when $0.22 \leq \zeta \leq 0.25$. However, as $\vartheta$ increases to $0.1$, the gap between the OP of $U_1$ and $U_2$ widens considerably, and a balanced OP is only observed for $0.25 \leq \zeta \leq 0.3$. It is important to clarify that in this context, \textit{balanced OP} refers to achieving a desired QoS while ensuring that the OP disparity between $U_1$ and $U_2$ remains minimal. Therefore, from Fig.~\ref{zeta}, a few key takeaways can be drawn: (i) Imperfect SIC should not be neglected when designing wireless networks for future connectivity, as it significantly impacts system performance. (ii) Algorithm design must focus on mitigating the effects of imperfect SIC especially for RSMA uplink scenarios, ensuring robust  communication. Furthermore, the selection of $\zeta$ in uplink RSMA systems must account for both imperfect CSI and imperfect SIC, as ignoring these impairments can lead to severe performance degradation.

 Fig.~\ref{RSMA_NOMA} presents a comparison of the OP between NOMA and RSMA, plotted against varying transmit power for both uplink and downlink communications, considering different values of the channel estimation quality parameter ($\beta_{u,d}$) and perfect SIC. As expected, RSMA consistently outperforms NOMA in both uplink and downlink scenarios. Notably, the OP in NOMA saturates much earlier compared to RSMA, highlighting its limited ability to effectively mitigate interference from other users.
 In the case of uplink NOMA, the OP for all users remains closely aligned, which is due to the absence of message splitting. A similar trend is observed in the downlink as well. Here, the reason for the behavior is slightly different: in downlink RSMA, user messages are transmitted as a superimposed symbol. The common stream is decoded by all users, each having the same SINR threshold, while individual private streams are decoded by the respective users based on their individual private stream thresholds.
 An interesting observation is that even for lower values of $\beta_{u,d}$, such as $\beta_{u,d} = 0.6$ for RSMA, the system still outperforms NOMA, which operates at $\beta_{u,d} = 0.8$. This highlights the superior performance and robustness of RSMA over NOMA in both uplink and downlink communication setups.

Fig.~\ref{FD_HD_throughput} illustrates the sum-throughput comparison between FD and HD enhanced RSMA systems under the assumption of a perfect scenario. The results clearly demonstrate that the FD system achieves significantly higher throughput compared to the HD-enhanced system. This is primarily due to the FD system’s ability to perform simultaneous transmission and reception, effectively utilizing the available resources in both directions. As a result, FD systems can exploit the spectrum more efficiently, leading to substantial throughput gains, especially in scenarios with constrained network resources. However, it is important to note that the performance advantage of FD systems is contingent on the efficiency of the SI cancellation technique. If the SI cancellation is not effectively implemented, the expected throughput gains may diminish, thereby reducing the overall benefit of an FD-enhanced network setup.
\section{Conclusion}
\vspace{-.0em}
In this work, we investigated FD-enhanced RSMA systems under practical constraints like imperfect CSI, SIC, CCI, and random SI modeling. We derived closed-form expressions for OP and throughput, revealing the impact of imperfections on system performance. Our results show that imperfect CSI affects performance at low transmit power but fades at higher levels. Additionally, the efficiency of the SI cancellation factor is crucial for FD systems to outperform HD systems, especially in resource-limited scenarios. We also highlighted the importance of optimizing the rate allocation factor in uplink RSMA to maintain balanced performance, particularly under imperfect SIC. Overall, FD-RSMA systems provide significant gains in throughput and interference resilience compared to FD-NOMA-enhanced systems. 
\appendices
 \section{}
 \begin{proof}
  Here, we derive the $n_{\text{th}}$ user OP. To do so, we evaluate \eqref{SINR_comm_down} and \eqref{SINR_pvt} separately with their respective thresholds. This can be expressed as
   \begin{align}\label{down_proof}
   \mathcal{P}^{d}_{n}= 
\underbrace{\mathrm{Pr}\left(\gamma^d_{c,n}<\gamma_{c,n}^{th}\right)}_{\mathcal{P}_{c,n}}  \text{and}   
\underbrace{\mathrm{Pr}\left(\gamma_{p,n}^d<\gamma_{p,n}^{th}\right)}_{\mathcal{P}_{p,n}}.
\end{align}
We first evaluate the OP for the common stream. Substituting \eqref{SINR_comm_down} into ${\mathcal{P}_{c,n}}$ in \eqref{down_proof}, we obtain
\begin{align}\label{down_step_1}
    \mathcal{P}_{c,n}= \mathrm{Pr}\left(|\hat{h}_n|^2\!<\! \gamma^{th}_{c,n}\!\left(\frac{|\hat{g}_{\mathfrak{C}1}|^2\rho_1+|\hat{g}_{\mathfrak{C}2}|^2\rho_2+S_{1}}{\rho_b\left(\alpha_{c}-\gamma^{th}_{c,n}\left(1-\alpha_{c}\right) \right)}\right)\!\!\right),
\end{align}
where $S_1 = \rho_b\Omega_{hne}+\rho_1\Omega_{g\mathfrak{C}1e}+\rho_2\Omega_{g\mathfrak{C}2e}+1$. Note that, \eqref{down_step_1} is only valid when $\alpha_c>\gamma^{th}_{c,n}\left(1-\alpha_{c}\right)$. Otherwise, $\mathcal{P}_{c,n}=1$. 
where $F_Z(z)$ and $f_Z(z)$ are the cumulative distribution function (CDF) and probability density function (PDF) of a random variable $Z$, respectively. As highlighted in Section~\ref{Sec_Analysis}, the squared channel magnitude follows a Gamma distribution. Therefore, we substitute the CDF of the Gamma distribution from~\cite{Farjam_WCNC_24} into \eqref{down_Step_2}, and thus, we obtain
\begin{align}\label{down_Step_3}
    &\mathcal{P}_{c,n}=\int\limits^{\infty}_{0}f_{Y}(y) \int\limits^{\infty}_{0}f_{Z}(z)\nonumber\\
    &\times\frac{1}{\Gamma(m_n)}\gamma\left(\!\!m_n, \frac{m_n\gamma_{c,n}^{th}\left(y \rho_{1}+z\rho_2+S_{1}\right)}{\hat{\Omega}_n\rho_b\left(\alpha_{c}-\gamma^{th}_{c,n}\left(1-\alpha_{c}\right)\right) }\right)dzdy,
\end{align}
As $m_n$ is a positive integer, it follows that the distribution can be represented as an Erlang distribution, leading to the expansion of \eqref{down_Step_3} as
\begin{align}\label{Erlang_down}
     &\mathcal{P}_{c,n}\!\!=\!\!\int\limits^{\infty}_{0}\!\!f_{Y}(y)\!\! \int\limits^{\infty}_{0}\!\!f_{Z}(z)\big[1-\exp\left(-W_{1}\rho_1 y-W_2\rho_2 z+W_1S_{1}\right)\nonumber\\
     \nonumber\\
&\times\sum\limits_{k=0}^{{m_n}-1}\frac{\left(W_1\left(y \rho_1+z\rho_2+S_{1}\right)\right)^k}{k!}\big]dzdy,
\end{align}
where $W_1 = \frac{m_n \gamma^{th}_{c,n}}{\hat{\Omega_{n}}\rho_b\left(\alpha_c- \gamma^{th}_{c,n}\left(1-\alpha_c\right)\right)}$. Now, using the PDF of Gamma distribution for $f_{Z}(z)$ in \eqref{Erlang_down} and applying Binomial expansion \cite[ $1.111$]{2015249}, we can write \eqref{Erlang_down} as
\begin{align}\label{step_4}
    \mathcal{P}_{c,n}= 1-&\sum\limits_{k=0}^{{m_n}-1}\sum\limits_{l=0}^{k}\sum\limits_{p=0}^{l}\binom{k}{l}\binom{l}{p} B_1^{(l-p)} B_2^{(k-l)} B_3^p A_1^{m_{u2}}\nonumber\\
    &\times\frac{\exp\left(-B_1\right)}{(p!)\Gamma(m_{u2})}  \int\limits^{\infty}_{0}f_{Y}(y) y^{k-l}\exp\left(-B_2y\right)\nonumber\\
    &\times\int\limits^{\infty}_{0} z^{m_{u2}+p-1}\exp\left(-z\left(A_1+B_3\right)\right)dzdy,
\end{align}
where $B_1={W_1 S_1}$, $B_2={W_1 \rho_1}$, $B_3={W_1 \rho_2}$, $A_1=\frac{m_{u2}}{\hat{\Omega}_{u2}}$.
The integral $\int\limits^{\infty}_{0} z^{m_{u2}+p-1}\exp\left(-z\left(A_1+B_3\right)\right)dz$ can be solved using~\cite[$3.381.4$]{2015249}. Lastly, putting the PDF of $f_{Y}(y)$. Thus, \eqref{step_4} can be given as 
\begin{align}\label{step_5}
    \mathcal{P}_{c,n}= 1-&\sum\limits_{k=0}^{{m_n}-1}\sum\limits_{l=0}^{k}\sum\limits_{p=0}^{l}\binom{k}{l}\binom{l}{p} B_1^{(l-p)} B_2^{(k-l)} B_3^p A_1^{m_{u2}}\nonumber\\
    &\times\frac{\exp\left(-B_1\right)C_1^{m{u1}}\Gamma(m_{u2}+p)}{(k!)\Gamma(m_{u2})(A_1+B_3)^{m_{u2}+p}\Gamma(m_{u1})}  \nonumber\\
    &\times\int\limits^{\infty}_{0}y^{m_{u1}-1+k-l}\exp\left(-y\left(B_2+C_1\right)\right)dy, 
\end{align}
where $C_1=\frac{m_{u1}}{\hat{\Omega}_{u1}}$. Using~\cite[$3.381.4$]{2015249}, we solve the integral and obtain the expression for $\mathcal{P}_{c,n}$ in \eqref{common_out_downlink}. Similarly, we solve for $\mathcal{P}_{p,n}$ and obtain the OP expression in \eqref{private_out_downlink}. Lastly, using \eqref{common_out_downlink} and \eqref{private_out_downlink}, we get \eqref{out_n_user}.

 \end{proof}
 \bibliographystyle{IEEEtran_renamed}
	\bibliography{referencing}
\end{document}